\newtheorem{theorem}{Theorem}
\newtheorem{lemma}{Lemma}
\newtheorem{corollary}{Corollary}
\def\EQ#1{\begin{eqnarray}#1\end{eqnarray}}
\begin{document}

\title{Fluctuations in Single-Shot $\epsilon$-Deterministic Work Extraction}

\author{Sina Salek}
\email[]{salek.sina@gmail.com}
\affiliation{Department of Computer Science, The University of Hong Kong, Pokfulam Road, Hong Kong}
\affiliation{School of Mathematics\\
University of Bristol, Bristol BS8 1TW, United Kingdom}

\author{Karoline Wiesner}
\affiliation{School of Mathematics\\
University of Bristol, Bristol BS8 1TW, United Kingdom}
\affiliation{Bristol Centre for Complexity Sciences\\
University of Bristol, 1-9 Old Park Hill, Bristol BS2 8BB, United Kingdom}

\begin{abstract}
There has been an increasing interest in the quantification of nearly deterministic work extraction from a finite number of copies of microscopic particles in
finite time.  This paradigm, so called single-shot $\epsilon$-deterministic work extraction, considers processes with small failure probabilities. However, the resulting 
fluctuations in the extracted work entailed by this failure probability have not been studied before. In the standard thermodynamics paradigm fluctuation theorems are powerful tools to study fluctuating quantities.
Given that standard fluctuation theorems are inadequate for a single-shot scenario,
here we formulate and prove a fluctuation relation specific to the single-shot $\epsilon$-deterministic work extraction to bridge this gap. 
Our 
results are general in the sense that we allow the system to be in contact with the heat bath at all times. As a corollary of our theorem we derive the known 
bounds on the $\epsilon$-deterministic work.
\end{abstract}
\maketitle

\section{Introduction \label{intro}}

Thermodynamics has been historically developed as a discipline that deals with a large number of systems, e.g. an ensemble of particles in a box. In recent years, however, there has been significant interest in studying
the relationship between thermodynamics and statistical physics, for finite number of copies of particles and finite application of operations on them. This limit is called single-shot thermodynamics
\cite{DelRio2011,Gemmer2015,Horodecki2013,AAberg2013}.
A particularly interesting question in single-shot thermodynamics is extraction of work from systems that are out of equilibrium \cite{AAberg2013, Horodecki2013}. In order to formulate work extraction, we first need to
define work. The first law of thermodynamics splits energy into an ordered form and a disordered form. The ordered form of energy is called work. The challenge faced in the quantification of work in the single-shot
regime is that, for small ensembles of microscopic systems, fluctuations dominate. Hence, determining the typical behaviour of systems, for which one would need large ensembles, is challenging. Thus, it is not clear 
how one can quantify ordered energy, i.e. work as perceived in standard thermodynamics. A way around this is to design a process which uses an amount of energy to lift the state of an external system, called `battery'
from a single energy level to another single energy level. This way one can define a notion of work that can be quantified in the single-shot regime, and that also corresponds to the notion of work as understood in
standard thermodynamics. Nevertheless, since fully deterministic work extraction can be too stringent a constraint, one might also allow processes that fail with a small probability $\epsilon <<1$. 
Such a process is called `$\epsilon$-deterministic' which is to say `nearly deterministic'.
Horodecki and Oppenheim in \cite{Horodecki2013} discussed a scenario involving a system out of thermal equilibrium and block diagonal in the energy eigenbasis, coupled to a bath and battery. The task is to lift the 
state of the battery from the ground state to its excited state $\epsilon$-deterministically. If the task is done successfully, a certain amount of work is stored in the battery. 
In this scenario of a small failure probability $\epsilon <<1$ they find an upper bound 
for the work extracted.
The existence of a failure probability entails some fluctuations in the work extracted. Experiments which fail to extract exactly the desired amount of work are dismissed entirely. What has not been considered so far 
is that rather than dismissing these experiments one can quantify the alternative amount of work that has been extracted and how it fluctuates. Prior to the work presented here it was not known how to characterise
these fluctuations. 
In this paper, we formulate and prove a theorem (Theorem \ref{epsilon}) that characterises the fluctuations of the extractable work due to the failure probabilities in single shot thermodynamics. 

To answer this, we use an idea developed in another domain of statistical 
physics.
In statistical physics powerful tools called fluctuation theorems \cite{Campisi2011}
have been developed to characterise fluctuations in quantities such as work or entropy. They put restrictions on the probability density function of
these fluctuating quantities. 
In the case of work, for instance, these theorems compare the probability of work cost of a thermodynamic process with the probability of 
work gain of its reverse process. 
An example of fluctuation theorem is Crooks work relation \cite{Crooks1999}. Crooks theorem relates the probability, $P(w,\mathcal{P})$, of work cost, $w$, of 
a process to the probability, $P(-w,\mathcal{P}^{rev})$, of work gain in the time-reverse of that process, where $\mathcal{P}$ and $\mathcal{P}^{rev}$
are the process and its inverse respectively. The relation is formally expressed as $\frac{P(w,\mathcal{P})}{P(-w,\mathcal{P}^{rev})}=e^{\beta (w- \Delta F)}$,
where $\beta$ is the inverse temperature of the environment divided by the Boltzmann constant, and $\Delta F$ is the equilibrium free energy difference of 
the initial and final state. A thermodynamic process, there, is defined by changing the Hamiltonian of the system at an arbitrary speed, according to a
specific trajectory of the Hamiltonian. And the reverse process is one where the Hamiltonian is brought back to the initial setting, according to the reversed
trajectory of the forward process. The ratio of these two probabilities equals the exponential of the work dissipated in the process, $w_{diss}=w-\Delta F$. Therefore, in a process
where no work is dissipated, \emph{e.g.} an isothermal expansion, these two probabilities are the same and no fluctuation in work occurs. As we see, this theorem
characterises fluctuations in a thermodynamical quantity that meets the conditions of the theorem, by comparing the probability of the forward process to that
of the backward process. In this paper we show that the main ideas of fluctuation relations can be used to characterise the fluctuations arising from the $\epsilon$
failure probabilities. However, since the currently existing fluctuation theorems have different assumptions to the ones in the single-shot work extraction, we
formulate our theorem specifically for the single-shot scenario. This theorem will put the notion of fluctuations in $\epsilon$-deterministic work extraction, 
and that in standard thermodynamics on an equal footing. 

To formulate a fluctuation theorem for single shot thermodynamics, we relax the original assumption of Crooks work relation that the system, both in the forward process and its inverse, starts in a state
of thermal equilibrium with its environment. Our setup, in keeping with the setting in \cite{Horodecki2013}, consists of a work system, a
battery, and a thermal bath. The work system
remains in contact with the bath throughout the process. 
The protocol extracts work $W$ by performing some operations on the work system and as a result lifts the battery to the excited state 
$|W\rangle\langle W |$ with $\epsilon << 1$ failure probability. 
As a corollary of our
result we obtain the bounds of the extractable work found by \AA berg
\cite{AAberg2013}.

It is worth noticing that as a byproduct of choosing the present setting, we avoid a potential complication in the fluctuation analysis of open 
systems. In particular, thermodynamic work, $W$, is measured in a number of different ways in the context of fluctuation 
relations. Predominantly, the so-called scheme of Two-Measurement Protocol (TMP) is used were the eigenvalues of the initial and final Hamiltonians are measured,
and work, $W$, is defined as the difference between the two eigenvalues. Considering the first law of thermodynamics, we observe that for closed systems with 
no heat flow this quantity is the same as total work. However, in the case of open systems with strong coupling TMP does not provide a good measure of work. 
In our setup all of the work goes to lifting the state of the battery and despite the fact that the heat bath is attached to the work system at all times, 
a simple act of measuring the initial and final eigenvalues of the battery gives a correct measure of the work.

In the next section we give the technical preliminaries before presenting the main theorem (Theorem~\ref{epsilon}) in Section \ref{smooth}. The proof of Theorem~\ref{epsilon} is given in the 
Appendix.

\section{Preliminaries \label{prelim}}

The tool with which we choose to prove the theorem is the so-called resource theory of states out of thermal equilibrium. Resource theories \cite{Gour2008,Brandao2013} 
are mathematical formalisms, designed to answer questions such as `what state can be prepared in a given 
physical situation?' or `is a transition from one state to another possible, and if yes what is the transition probability?'. By answering these questions, 
resource theories allow us to study what happens to the state of a system in a physical situation in full generality.
Resource theories achieve this by restricting the allowed operations to a predefined set. A given set of operations one can prepare a set of states for free.
Any state outside that set is considered a resource. For instance, it is a well-known fact that if we restrict ourselves to local operations and classical 
communications (LOCC), we can create separable states. For LOCC, therefore, an entangled state is considered a resource.
The resource theory of athermal states is defined by restricting our allowed operations to the so-called thermal 
operations. Thermal operations, as explained concisely in \cite{Alhambra2016}, are the set of maps characterised by the following rules $i)$ a system with 
any Hamiltonian in the Gibbs state of that Hamiltonian can be added, $ii)$ any subsystem can be discarded through tracing out and $iii)$ any energy-conserving
unitary, i.e. those unitaries that commute with the total Hamiltonian, can be applied to the global system. For these operations a state out of equilibrium is
a resource and can be utilised to extract work. The commutation condition is imposed to ensure that total energy is conserved. As we see, the resource theory of
athermal states provides a framework wherein one can study the most general thermodynamical state transformations that respect energy conservation. Specifically,
in our theorem, as we are interested in thermodynamical processes that extract a certain amount of work, we use this framework to identify 
all possible processes that meet our conditions.

In our setup, initially the total 
system consists of the work system in state $\rho_s$ from which we intend to extract work, a bath in state $\tau_{\text bath}$ and a battery in its pure ground 
state. The 
battery is a finite-dimensional many-level system, used for storing the ordered energy, i.e. work. A many-level battery is needed in our case 
in order to store the different possible amounts of the fluctuating work. 
A thermal operation is used to extract work from the work state. The thermal operation is generated by a global unitary operator on the initial state of 
the total system. This operator creates correlations between the work state, bath, and battery. As we can accept a small failure probability, the final
state of the system can be slightly different every time we run the protocol. In particular the battery may be charged to the point $W^\delta$, where $\delta$
here signifies different excited levels of the battery. The thermal operation is chosen such that it aims to transform the initial state of the work system
to the Gibbs state $\tau_S$. However, again, due to the failure probability, the work system might end in a state that is not exactly Gibbs, we call this state $\tau_S^\delta$.
A more detailed study of the final state in the setting we choose can be found in \cite{Gemmer2015}. To formulate our fluctuation theorem we also need to 
specify a backward process. The backward process is generated by the unitary operator, inverse of the global unitary used in the forward process. We choose the 
initial state of the backward process to be the correlated state of the work system and bath, such that the work state is left in a Gibbs state, after
tracing out the state of the bath, as explained below. The initial state of the battery in the backward process is the same as its final state in the forward process. This guarantees
that the work extracted in the forward process and the work cost in the backward process are the same. The forward and backward processes are illustrated in 
Fig. (\ref{states}).

Our choice of the initial state of the backward process is due to a subtlety inherent in single-shot thermodynamics. In \cite{Brandao2013} it was shown that 
there is an intrinsic irreversibility in the single shot scenario in the sense that if a resource $\rho$ can be converted to a resource $\sigma$, in general 
one cannot assume that the resource $\sigma$ can be converted back to the initial resource $\rho$ under the same conditions. In particular, in the first part 
of \cite{Horodecki2013} the maximum work extractable in the work extraction setting above was found, where work was extracted by transforming the resource 
state $\rho$ into the Gibbs state. In the second part, a reverse of that process was defined as a task of starting with a bath, a resource system in Gibbs state and a
battery, all initially in tensor product with each other. It was shown that the work cost of returning the Gibbs state back to $\rho$ is greater than the work
extracted in the first part in the single-shot scenario. This is due to the detailed correlations that are created between the system and the bath after the 
application of the global unitary that generates the thermal operations, which are significant for single-shot work extraction. Therefore, the backward process
for a fluctuation relation described above has to begin in a correlated state of resource system and the bath, such that the correlations are produced by the 
operations of the kind used in the forward process. 

\begin{figure}[t!]
  \centering
      \includegraphics[width=0.45\textwidth]{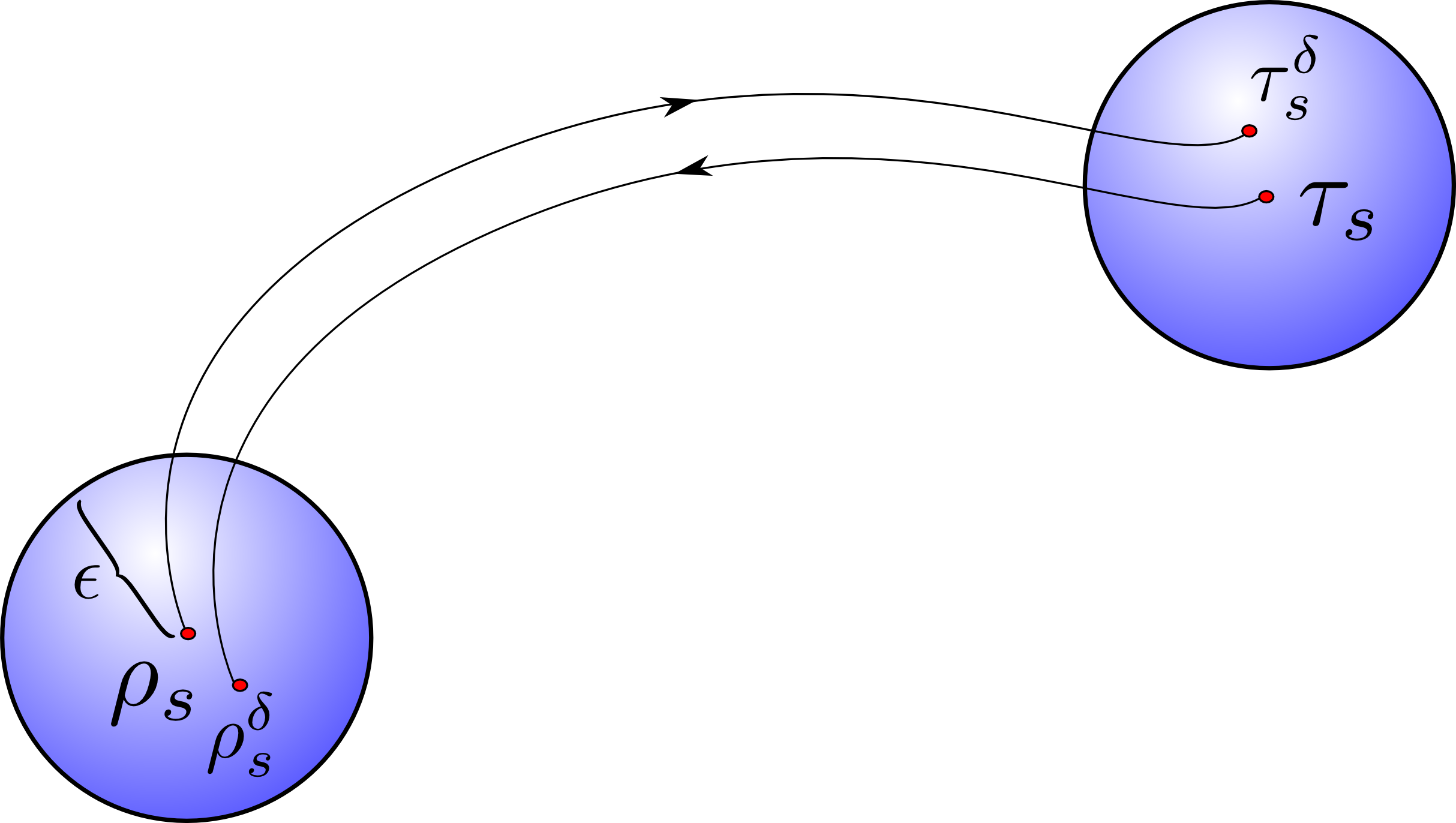}
  \caption{\small 
  In the forward process, the system is initially prepared in the state $\rho_s$. Evolving the system and the bath together according to some global unitary operator
  correlates the state of the two. 
  The transformation is such that a work $W^\delta$ is extracted and the final state of the system $\tau_s^\delta$ is close to Gibbs state. The backward process
  is defined by the dual operator to the unitary in the forward process, and the initial state of the bath and the system are correlated, such that tracing out
  the state of the bath leaves the system in the Gibbs state. The final state of the backward process is in the ball of $\epsilon$-close states to the initial 
  state $\rho_s$.
  }
  \label{states}
\end{figure}

In our theorem we use the $\epsilon$-smooth Renyi relative entropy of order $\alpha = 0$.
The Renyi relative entropy of order $0$ is defined as $D_0(\rho_s || \tau_s):=- \ln Tr[\rho_s^0 \tau_s]$, with $\rho_s^0$ being the support of $\rho_s$. The 
$\epsilon$-smooth version of this entropy is defined by maximising it over the distributions that are $\epsilon$-close to the state of the system, formally defined as 
\EQ{
D_0^\epsilon(\rho_s || \tau_s):=\max_{\bar{\rho_s} \in \mathcal{B}^\epsilon(\rho_s)} D_0(\bar{\rho_s} || \tau_s). \label{diver}
}
with
\EQ{
 \mathcal{B}^\epsilon(\rho_s):=\{ \bar{\rho_s} \geq 0 : || \bar{\rho_s}-\rho_s||_1 \leq \epsilon, Tr(\bar{\rho_s}) \leq Tr(\rho_s)\}
}
defining the ball of $\epsilon$-close distributions. This entropy was shown to act as a non-equilibrium free energy difference in the single-shot
regime \cite{Horodecki2013, AAberg2013}. 

\section{The Main Theorem \label{smooth}}

We now state the main result, a fluctuation theorem relating the ratio between the probability of extracting 
work $W$ in an $\epsilon$-deterministic forward process and the probability of putting work $W$ back into the system in a reversed 
process.

\begin{theorem}
 \label{epsilon}
 Consider a total system $\tau_{bath}\otimes \rho_s \otimes |0\rangle \langle 0|$
consisting of a work system in some state $\rho_s$, diagonal in energy eigenbasis, a thermal bath at temperature $T$ in state
$\tau_{\text  bath}$, and a many-level  battery system in its pure ground
state $|0\rangle \langle 0|$. Furthermore, assume $\epsilon$-deterministic
forward process
$\mathcal{P}^\epsilon$, defined by a global unitary on the total system that commutes with its Hamiltonian, and a reverse process $\mathcal{P}^{\epsilon,rev}$,
defined by inversing the global unitary of the forward process. The initial state of the backward process is a correlated state of the bath and 
the resource system, such that the work state is in a Gibbs state, $\tau_S$, after tracing out the bath. With these assumptions,
the ratio of work extraction 
probability, $P(W,\mathcal{P}^\epsilon)$, to work cost probability, $P(-W,\mathcal{P}^{\epsilon , rev})$, is given by
\EQ{
\frac{P(W,\mathcal{P^\epsilon})}{P(-W,\mathcal{P}^{\epsilon,rev})} = e^{\beta W +\ln (1-\delta) - D_0^\epsilon (\rho_s||\tau_s)}, \label{theorem}
}
where $W$ is the work extracted. $\beta = 1/kT$, where $k$ is the Boltzmann constant and $T$ is the
temperature. $D_0^\epsilon (\rho_s||\tau_s)$ is the smooth version of Renyi relative entropy of order $\alpha=0$. $\ln(1-\delta)$ is the amount 
of deviation from the maximum extractable work, with $0 \leq \delta \leq \epsilon$.
\end{theorem}

We provide the proof of Theorem~\ref{epsilon} in the Appendix.
Theorem \ref{epsilon} characterises the fluctuations in the extracted work due to the failure probability in the single-shot $\epsilon$-deterministic work extraction scenario. It
states that if a quantum resource is used to extract work in an $\epsilon$-deterministic process by thermal operations,
the probability of the work extracted is related to the probability of the work put back into the system in its reversed process by an exponential factor.
The term in the 
exponent of Eq. (\ref{theorem}) is the analogue of dissipated work. This characterises a type of irreversibility in the
finite-run behaviour of a microscopic system in the single-shot regime, akin to that of Crooks relation. As discussed before, the final state of the system 
in the forward process may be such that the state of the system is not exactly Gibbs. The term $\ln(1-\delta)$ signifies this deviation. We show the details
of this in the Appendix.

In the following corollary we shall observe how a known bound of work extraction follows from our result.

\begin{corollary} \label{cor}
The work that can be extracted from a system out of equilibrium by thermal operations in an $\epsilon$-deterministic process is bounded by 
\begin{widetext}
 \EQ{\label{eq.cor}
 kTD_0^\epsilon (\rho_s||\tau_s)+kT\ln (1-\epsilon)-kT\ln(1-\delta) \leq W \leq kTD_0^\epsilon (\rho_s||\tau_s)-kT\ln (1-\epsilon)-kT\ln(1-\delta),
 } with $0 \leq \delta \leq \epsilon$.
\end{widetext}
\end{corollary}

\begin{proof} To see this, we first derive the lower bound and then the upper bound as follows. For the lower bound we multiply both sides of the Eq. (\ref{theorem}) by $P(-W,\mathcal{P}^{\epsilon,rev})$. From the fact that 
$P(W,\mathcal{P^\epsilon})\geq 1-\epsilon$
we have
\EQ{
1-\epsilon &\leq& P(-W,\mathcal{P}^{\epsilon,rev}) e^{\beta W+\ln (1-\delta) - D_0^\epsilon (\rho_s||\tau_s)}\\
&\leq& e^{\beta W+\ln (1-\delta) - D_0^\epsilon (\rho_s||\tau_s)}, \label{05alcor}
}
where in the second inequality we used the fact that any probability is less than or equal to one.
Taking the logarithm of Eq. (\ref{05alcor}), the lower bound in the statement of Corollary~\ref{cor} follows.

For the upper bound we rewrite Eq. (\ref{theorem}) as 
\EQ{
\frac{P(-W,\mathcal{P}^{\epsilon,rev})}{P(W,\mathcal{P^\epsilon})} = e^{-\beta W -\ln (1-\delta) + D_0^\epsilon (\rho_s||\tau_s)}. 
}
Following the same procedure as above, here we have 
\EQ{
1-\epsilon &\leq& P(-W,\mathcal{P}^{\epsilon}) e^{-\beta W-\ln (1-\delta) + D_0^\epsilon (\rho_s||\tau_s)}\\
&\leq&  e^{-\beta W-\ln (1-\delta) + D_0^\epsilon(\rho_s||\tau_s)}, \label{05alcor2}
}
Taking the logarithm gives the upper bound of Corollary~\ref{cor}.
\end{proof}

In \cite{AAberg2013}, \AA berg gives the bounds on work extraction in $\epsilon$-deterministic work extraction as 
\EQ{
kTD_0^\epsilon (\rho_s||\tau_s) \leq W \leq kTD_0^\epsilon (\rho_s||\tau_s) -kT\ln (1-\epsilon). \label{AAbergcor}
}
Notice that the upper bound in Eq. ({\ref{AAbergcor}}) is the special case of Eq.~\ref{eq.cor} with $\delta = 0$
. The lower bound is the special case of Eq.~\ref{eq.cor} with $\delta =\epsilon$.

\section{Discussion \label{concl}}
In the discussion section of Ref. \cite{AAberg2013}, \AA berg 
leaves the question of the link between the fluctuations of the $\epsilon$-deterministic work extraction paradigm to that of the fluctuation theorems
paradigm open.
Theorem \ref{epsilon} above has answered this question.
One can see, as we show in the appendix, that in the absence of any fluctuations of the former type (the case where $\epsilon$ and $\delta$ are zero) the
the dissipated work goes to zero and the forward and backward probabilities take the same value. This means  
there will be no fluctuations of the latter type. It is only through the introduction of the failure probabilities that we obtain a fluctuation theorem,
and hence demonstrating the direct connection between the two. We have derived an equality in the same way Crooks
relation does for work probability density function in the standard account of thermodynamics. 
The relation in Theorem 
\ref{epsilon} quantifies the ratio of the probability of work extraction in a process to that of work cost in its backward process, in terms of
the smooth 
Renyi relative entropy of order $\alpha = 0$. As a corollary we find the known bounds of work extraction protocols. The necessity of formulating
and proving this variation of fluctuation theorem is that the single-shot regime has different assumptions to the standard thermodynamics, which renders the 
standard fluctuation relations inadequate for our purposes. In the standard fluctuation relations the work is determined by making several energy measurements
at the beginning and the end of the processes and comparing the energy eigenvalues. This requires many measurements. In the single-shot scenario we require 
to be able to determine work by a single measurement. In our setting this is done by letting the process store all of the work in the battery. Then the work 
is measured by a single energy measurement on the battery state.
Given the record in the experimental applications of the present fluctuation theorems \cite{Liphardt2002, Collin2005, Douarche2005}, we believe
our work will contribute to the experimental results on 
single-shot thermodynamics by providing a method to measure smooth Renyi entropies. As an outlook on further theoretical extensions of this work, one can think of
generalisations or restrictions on the allowed operations, as well as consideration of quantumness. For instance Faist \emph{et. al.} have shown that 
Gibbs-preserving maps outperform thermal operations
in the quantum regime \cite{Faist2015}. A further investigation into the hierarchy of operations and possible exploitation of quantum correlations and 
catalysts merits an extended discussion. Furthermore, it would be interesting to apply a version of Asymptotic Equipartition Theorem \cite{Tomamichel2009a}
to investigate the relationship between our theorem and the ones in the standard statistical mechanics. 
In this paper we have focused on the notion of $\epsilon$-deterministic work. However, probabilistic work in the single-shot scenario is also an 
interesting concept and has been recently studied \cite{Alhambra2016}.

\noindent{\bf Note added:} Similar results were obtained independently by Dahlsten et al, using a different set-up and different starting 
assumptions, in {\em Equality for worst-case work at any protocol speed}.

{\bf Acknowledgements}
We would like to thank Johan \AA berg, Jens Eisert, Oscar Dahlsten, Rodrigo Gallego, Henrik Wilming and Albert Werner for their comments and discussions. 
Part of this work was supported by the COST Action MP1209 "Thermodynamics in the quantum regime", Hong Kong Research Grant 
Council through Grant No. 17326616, National Science Foundation of China through Grant No. 11675136, HKU Postdoctoral Fellowship. K.W. acknowledges funding through EPSRC (EP/E501214/1).

{\small

\bibliography{SSFT}
\bibliographystyle{plain}

}

\appendix*
\section{Proof of the Theorem \ref{epsilon} \label{app}}
Here, we prove Theorem \ref{epsilon} by deriving Eq. (\ref{theorem}). However, to begin with we consider the simple case of a process which extracts the amount of work
$W$ from the work system with probability one, i.e. with failure probability $\epsilon = 0$. 
For such a case, the following lemma holds. This does not give a fluctuation
theorem directly, as without a failure probability there is no fluctuation. However, it is an important building block in proving the theorem. In this lemma we use Renyi relative entropy of order $\alpha =0$ defined
as $D_0(\rho_s||\tau_s):=-\ln Tr[\rho_s^0 \tau_s]$, with $\rho^0_s$ being the 
support of the initial state and $\tau_s$ the thermal state of the system. 
In the following lemma we adopt the same sign convention for work as used in \cite{Horodecki2013, AAberg2013}. Here we have the same assumptions as in 
\cite{Horodecki2013} for the allowed operations to be thermal operations, which are obtained by performing some global unitaries on the total initial state. 
In the following lemma the reverse process is obtained from the inverse of the global unitary in the forward process. The 
initial state of the reverse process, here, is set to be the same as the final state of the forward process. Our assumptions on the bath are also the same 
as \cite{Horodecki2013} namely that
\begin{inparaenum}[(i)]

\item the spectrum of the heat bath is continuous, i.e. for an energy of the heat bath $E_R$ and two arbitrary energies of the system, $E_S$ and $E'_S$, there 
exists $E'_R$, such that $E_R + E_S = E'_R + E'_S$, 

\item and around the energy $E$ the degeneracies can be written as 
\EQ{
\nonumber g(E+\Delta E)&=&e^{S(E+\Delta E)}\\
&=&e^{S(E)+\Delta E\frac{\partial S(E)}{\partial E}+O(\Delta E^2)}\label{approx} \\
&=& g(E)e^{\beta \Delta E+O(\Delta E^2)} \label{bath},
}
where $S(E):=\ln g(E)$ and $\beta :=\frac{\partial S(E)}{\partial E}$. Notice that for a large enough bath the second and higher order terms in Eq. 
(\ref{approx}) and (\ref{bath}) can be neglected, which is the assumption we adopt throughout this paper.
\end{inparaenum}

\begin{lemma}
Consider a total system $\tau_{bath}\otimes \rho_s \otimes |0\rangle \langle 0|$ consisting of a work system in some arbitrary state $\rho_s$, diagonal in energy
eigenbasis, subject to a 
time-independent Hamiltonian, a thermal bath at temperature $T$ in state $\tau_{\text  bath}$, and  a two-level  battery system in its pure ground state 
$|0\rangle \langle 0|$. Here, the work is extracted by lifting the state of the battery from its ground state to the pure excited state $|W\rangle \langle W|$.
The gap between the ground and the excited state of the battery is set to $W$.
Furthermore, assume processes $\mathcal{P}$ and $\mathcal{P}^{rev}$ consisting of Thermal Operations only. Then
\EQ{
\frac{P(W,\mathcal{P})}{P(-W,\mathcal{P}^{rev})} = e^{\beta
W- D_0(\rho_s||\tau_s)}~,
}
with $\beta= 1/kT$ where $k$ is the Boltzmann constant.   

\label{equality}
\end{lemma}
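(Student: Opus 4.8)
The plan is to run a Crooks-type argument adapted to thermal operations, taking microreversibility at the level of the global energy-conserving unitary and letting the bath's exponential density of states supply the Boltzmann weights. First I would reduce the forward process to a classical stochastic dynamics on the joint energy levels of the work system and the battery. Since $\rho_s$, $\tau_{\text bath}$ and $|0\rangle\langle 0|$ are all diagonal in energy and the thermal operation is an energy-conserving unitary $U$ on system $\otimes$ bath $\otimes$ battery, tracing out the initially thermal bath yields a column-stochastic matrix $G$ on the $(E_i,E_{\text batt})$ levels. The two bath assumptions enter exactly here: the continuous spectrum guarantees that any required energy balance $E_S+E_R=E'_S+E'_R$ can be met, and $g(E+\Delta E)=g(E)e^{\beta\Delta E}$ (Eq.~(\ref{bath}), higher orders neglected) forces $G$ to preserve the joint Gibbs state $\gamma=\tau_s\otimes\tau_{\text batt}$. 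Running $U^\dagger$ in place of $U$ then induces not $G^{-1}$ but the Gibbs-dual stochastic matrix $G^R$ fixed by $\gamma_f\,G^R_{i|f}=\gamma_i\,G_{f|i}$; this is precisely the detailed-balance relation imposed by $|\langle f|U|i\rangle|^2=|\langle i|U^\dagger|f\rangle|^2$ together with energy conservation once the bath degeneracies $g$ are summed out.

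Second, I would read off the forward probability. By construction the forward process lifts the battery from $|0\rangle$ to $|w\rangle$ with certainty, so $P(w,\mathcal{P})=1$. A one-line averaging argument sharpens this and is the only place $\epsilon=0$ is used: the total weight arriving in the excited battery level is $\sum_{i\in\mathrm{supp}\,\rho_s} p_i\big(\sum_f G_{(f,w)|(i,0)}\big)=1$ with every bracket in $[0,1]$, hence $\sum_f G_{(f,w)|(i,0)}=1$ for each energy level $i$ in the support of $\rho_s$. Thus the protocol is pinned down on $\mathrm{supp}\,\rho_s$ (and nowhere else), which is exactly what makes the order-zero R\'enyi divergence emerge in the end, since $D_0(\rho_s||\tau_s)$ depends on $\rho_s$ only through its support and since $\sum_{i\in\mathrm{supp}\,\rho_s}e^{-\beta E_i}=Z_s\,\mathrm{Tr}[\rho_s^0\tau_s]=Z_s\,e^{-D_0(\rho_s||\tau_s)}$.

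Third, and this is the crux, I would compute $P(-w,\mathcal{P}^{rev})$. The reverse process starts from the re-thermalised system $\tau_s$ with the battery in $|w\rangle$ and runs $G^R$; the event ``work $-w$'' is that the battery returns to $|0\rangle$ while the work system is restored to $\rho_s$. Substituting $G^R_{(i,0)|(f,w)}=(\gamma_{(i,0)}/\gamma_{(f,w)})\,G_{(f,w)|(i,0)}$ with the explicit ratio $\gamma_{(i,0)}/\gamma_{(f,w)}=e^{\beta(E_f+w-E_i)}$, and summing against the thermal initial weights $e^{-\beta E_f}/Z_s$, the system-energy-dependent Boltzmann factors cancel, leaving an overall factor carried by the battery gap $w$ multiplied by the cumulative Gibbs weight of the support of $\rho_s$, i.e.\ by $\mathrm{Tr}[\rho_s^0\tau_s]=e^{-D_0(\rho_s||\tau_s)}$. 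Tracking the remaining $w$-dependent factor gives $P(-w,\mathcal{P}^{rev})=e^{-\beta w-D_0(\rho_s||\tau_s)}$, and dividing by $P(w,\mathcal{P})=1$ yields the claimed identity. The same bookkeeping can be read directly off the $\beta$-ordered staircase of Fig.~\ref{beta}: the reverse run is bottlenecked by having to repopulate exactly the support of $\rho_s$, and the thermo-majorisation curve of $\tau_s\otimes|w\rangle\langle w|$ attains height $e^{-\beta w-D_0}$ over the Gibbs weight of that support.

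The main obstacle is the third step, and it has two parts. One is eliminating the bath cleanly, so that no residual bath-microstate weights survive in $P(-w,\mathcal{P}^{rev})$: this is exactly where the large-bath assumption (neglecting the $O(\Delta E^2)$ terms in Eq.~(\ref{bath})) is indispensable, since otherwise the dual relation $\gamma_f G^R_{i|f}=\gamma_i G_{f|i}$ holds only approximately and the cancellation of Boltzmann factors is only approximate. The other is fixing the correct ``success'' event for the reverse run: if one merely demands that the battery come back to $|0\rangle$ the reverse probability is $1$ and the statement is empty, so one must insist that the reverse dynamics reconstruct $\rho_s$ on the work system, and then verify that the freedom the forward protocol retains \emph{outside} $\mathrm{supp}\,\rho_s$ does not affect this probability --- which sends us back to the determinism conclusion of the second step. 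Once the lemma is in hand, promoting it to Theorem~\ref{epsilon} amounts to replacing $\rho_s$ by its optimal $\epsilon$-smoothed version from Fig.~\ref{beta} (yielding $D_0^\epsilon$) and repeating the count with a $\delta$-imperfect reverse, which contributes the $-\ln(1-\delta)$ term.
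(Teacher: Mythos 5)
Your strategy is genuinely different from the paper's and, as a strategy, it is sound: you eliminate the bath once and for all by passing to the induced Gibbs-stochastic matrix $G$ on (system,\,battery) levels and invoke the microreversibility/detailed-balance relation $G^{R}_{a|b}\gamma_b=G_{b|a}\gamma_a$ for the channel generated by $U^\dagger$ with a fresh bath, then use the averaging argument to pin $\sum_f G_{(f,w)|(i,0)}=1$ on $\mathrm{supp}\,\rho_s$ and read off $D_0$ from $\sum_{i\in\mathrm{supp}\rho_s}e^{-\beta E_i}$. The paper instead never coarse-grains to a stochastic matrix: it works inside a fixed total-energy shell, writes the joint state as $\bigoplus_{E_s}\eta^R_{E-E_s}\otimes\Pi_{E_s}\rho_s\Pi_{E_s}\otimes|0\rangle\langle 0|$, and does explicit microstate bookkeeping with ``transition currents'' $k_{i\to j}$ subject to $\sum_j k_{i\to j}=g(E-E_i)$ and $\sum_i k^{rev}_{j\to i}=g(E-E_j-w)$, so the ratio emerges as $\sum_i g(E-E_i)\big/\sum_j g(E-E_j-w)$. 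Note also a conceptual difference in what the reverse probability \emph{is}: in the paper the reverse process is the exact microscopic reversal, both $P(w,\mathcal{P})$ and $P(-w,\mathcal{P}^{rev})$ equal one (the paper says so explicitly after the proof), and the exponential is produced by rewriting the sums before cancellation; in your route $P(-w,\mathcal{P}^{rev})$ is a genuinely sub-unit number for the Gibbs-dual channel, with the event chosen as ``battery to $|0\rangle$ and system back on $\mathrm{supp}\,\rho_s$''. Your worry that the paper's choice makes the statement ``empty'' is in fact how the paper reads the lemma: the $\epsilon=0$ distribution is trivial, and the identity encodes the unique deterministic work value.

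There is, however, a bookkeeping error in your third step that you should fix. From your own (correct) ratio $\gamma_{(i,0)}/\gamma_{(f,w)}=e^{\beta(E_f+w-E_i)}$ and the thermal weights $e^{-\beta E_f}/Z_s$, the $E_f$ factors cancel and you get $P(-w,\mathcal{P}^{rev})=e^{\beta w}\,\mathrm{Tr}[\rho_s^0\tau_s]=e^{\beta w-D_0(\rho_s\|\tau_s)}$, \emph{not} $e^{-\beta w-D_0}$; dividing $P(w,\mathcal{P})=1$ by it gives $e^{D_0-\beta w}$. The paper's counting gives the reciprocal, $e^{\beta w-D_0}$ in the standard-sign convention $D_0=-\ln\mathrm{Tr}[\rho_s^0\tau_s]$ (the paper writes this as $e^{\beta w+D_0}$ only because inside the proof it uses $D_0=\ln\sum_{i\in\mathrm{supp}\rho_s}e^{-\beta E_i}-\ln Z$, opposite in sign to its own definition in Eq.~(\ref{diver})'s paragraph). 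Both expressions collapse to $1$ at the single admissible value $\beta w=D_0$, so no contradiction arises for the $\epsilon=0$ lemma itself, but as written your derivation does not reproduce the stated formula by your own algebra: either you must adopt and state the opposite sign convention for $w$ (work counted as negative when extracted) consistently in both the battery Gibbs weights and the final exponent, or you will land on the reciprocal identity. This matters downstream, because in the $\epsilon,\delta$ version (Theorem~\ref{epsilon}) the forward and reverse probabilities are no longer both one and the direction of the exponent is no longer a matter of convention-free triviality.
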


\begin{proof}
In \cite{Horodecki2013} it was shown that for a fixed total energy $E$ if the bath is very large compared to the system, the
state $\tau_{bath}\otimes \rho_s \otimes |0\rangle \langle 0|$ can be
written as $\bigoplus_{E_s} \eta_{E-E_s}^R \otimes \Pi_{E_s} \rho_s \Pi_{E_s} \otimes |0\rangle \langle 0|$, 
where $\eta_{E-E_s}^R:=\frac{\mathbb{I}^{E-E_s}_R}{g_R(E-E_s)}$, $g_R(E-E_s)$
is the degeneracy of the bath, and the identity $\mathbb{I}^{E-E_s}_R$ 
acts on a $g_R(E-E_s)$-dimensional space. Similarly, the final state can be written as 
$\bigoplus_{E^f_s} \eta_{E-E^f_s-W}^R \otimes \Pi_{E_s} \tau_s \Pi_{E_s}
\otimes |W\rangle \langle W|$. Since the total system is block diagonal in energy eigenbasis, we can treat 
the eigenvalues of it as probabilities. Denote the energy levels of initial and final state of the work system by $E_i$ and $E_j$,
respectively. The eigenvalues of the initial and final
state are $\frac{P(E_i)}{g(E-E_i)}$ and $ \frac{P(E_j)}{g(E-E_j-W)}$, respectively. 
Denote the transition current, \emph{i.e.} the number of eigenstates 
that go from $E_i$ to $E_j$ by $k_{i \rightarrow j}$. Then occupation probabilities of the final state of the system are given by 
\EQ{
P(E_j)=\sum_i k_{i \rightarrow j} \frac{P(E_i)}{g(E-E_i)}.
}
Each summand on the RHS is the joint probability of the system initially occupying the 
$i$th energy level and ending up in the $j$th energy level in the final state.
Since the protocol stores the energy difference between initial and final
state, i.e. the work $W$, in the
battery state the probability of a value of work $P(W,{\mathcal P})$ can be calculated by summing the
RHS over $j$.
Notice that the total number of eigenstates in initial energy level $E_i$ is

\EQ{\label{eq.kij}
d_i=\sum_j k_{i \rightarrow j}=g(E-E_i),
}
and that the total number of eigenstates in  final energy level $E_j$ is 
\EQ{
d_j=\sum_i k_{i \rightarrow j}=g(E-E_j-W).
}

We also need to consider the backward process. 
Recall
that the global unitary for the forward process gives transition currents $k_{i \rightarrow j}$ as the rate at which energy eigenstates are 
transformed from $E_i$ to $E_j$. 
Therefore, as we are restricting ourselves to the diagonal case, this current can always be used to directly characterise the backward 
transition currents,  
$k^{rev}_{j \rightarrow i}$, as the rate at which the energy eigenstates are transformed back from $E_j$ to $E_i$. 
Notice, for the reversed process we have
\EQ{\label{eq.kijrev}
\sum_i k^{rev}_{j \rightarrow i}=g(E-E_j-W)~, 
}
which is the number of eigenstates in the final state of the forward process. In the setting of this lemma, this is also the number of eigenstates 
of the initial state of the backward process.

As in the standard Crooks relation, it is also important to fix the initial state of the backward process. 
For this lemma we use the final state of the forward process as the initial state of the backward process. Here, the final state of the bath and the work system is
a correlated
state such that the system is in thermal state after tracing out the bath.

\emph{Remark 1 --} Formally speaking, a thermal operation $T$ is generated by a global unitary $U$, acting jointly on the reservoir, system and battery, such that 
$\sigma_{sb}:=T(\rho_{sb})=Tr_R(U \rho_{Rsb}U^\dagger)$, where $\rho_{Rsb}$ is the total state of the reservoir, system and battery, with a reduced state $\rho_{sb}$. The restriction on this global unitary is that it
has to commute with the total Hamiltonian of the reservoir, system and battery. In the current case, the global unitary sends the initial
state $\tau_{bath}\otimes \rho_s \otimes |0\rangle \langle 0|$ to the final state 
\EQ{
U\tau_{bath}\otimes \rho_s \otimes |0\rangle \langle 0|U^\dagger:=\sigma_{\{bath\}s}\otimes |W\rangle \langle W|,
}
such that the reduced state of the system, $\sigma_s$ is the Gibbs state.
These are the same initial and final states considered in the exact transformation 
case in \cite{Horodecki2013}. Indeed, in \cite{Horodecki2013} it was shown that a reverse process starting with a fresh bath will cost more energy than the forward process. However, defining the backward process
as the one produced by the inverse of the forward unitary, \emph{i.e.} $U^\dagger$, and starting from the final state of the forward process, clearly completely reverses the process as 
$U^\dagger U \tau_{bath}\otimes \rho_s \otimes |0\rangle \langle 0| U^\dagger U=\tau_{bath}\otimes \rho_s \otimes |0\rangle \langle 0|$. We shall make a remark on the backward process with non-zero epsilon later in this appendix, after we provide the proof of the theorem.

Below, we calculate the ratio of the work cost probability to that of work gain probability for such a process. 

\EQ{
\nonumber \frac{P(W,\mathcal{P})}{P(-W,\mathcal{P}^{rev})}&=&
\frac{\sum_{i,j}k_{i \rightarrow j} \frac{P(E_i)}{g(E-E_i)}}{\sum_{i,j}k^{rev}_{j \rightarrow i} \frac{P(E_j)}{g(E-E_j-W)}}\\
\nonumber &=& \frac{\sum_{i}g(E-E_i) \frac{P(E_i)}{g(E-E_i)}}{\sum_{j} g(E-E_j-W) \frac{P(E_j)}{g(E-E_j-W)}}\\
&=& \frac{\sum_i\sum_j k^{rev}_{j \rightarrow i}\frac{P(E_j)}{g(E-E_j-W)}}{\sum_{j}  P(E_j)}\label{05app1}
}

Using the fact that the system ends up
in the Gibbs state, together with the assumption on the bath in Eq. (\ref{bath}) that $g(E-E_j) = g(E)e^{-\beta E_j}$, one can rewrite the term
in the numerator as follows
\EQ{
\nonumber \sum_i\sum_j k^{rev}_{j \rightarrow i} \frac{P(E_j)}{g(E-E_j-W)}&=&\sum_i\sum_j k^{rev}_{j \rightarrow i}\frac{1}{g(E-W)Z}\\ 
&=&\sum_i \frac{g(E-E_i)}{g(E-W)Z} \label{05app2},
}
where $Z$ is the partition function of the Gibbs state. 

Inserting Eq. (\ref{05app2}) into Eq. (\ref{05app1}) and simplifying the fraction we find
\EQ{
\frac{P(W,\mathcal{P})}{P(-W,\mathcal{P}^{rev})}=\frac{e^{\beta W}\sum_{i\in supp(\rho_s)}e^{-E_i}}{\sum_{j} e^{-E_j}}, \label{simp}
}
where $supp(\rho_s)$ is the support of the state $\rho_s$.
Noticing that 
\EQ{
\nonumber - D_0(\rho_s||\tau_s)&=&\ln \sum_{i\in supp(\rho_s)}  e^{-\beta E_i} \\
&&- \ln \sum_{j\in supp(\tau_s)}  e^{-\beta E_j},
}
\EQ{
 \frac{P(W,\mathcal{P})}{P(-W,\mathcal{P}^{rev})} = \exp \{\beta W - D_0(\rho_s||\tau_s)\}. \label{fin}
}
\end{proof}

Notice that each of the probabilities in the numerator and the denominator of the fraction in the LHS of Eq. (\ref{fin}) equal one, i.e.
\EQ{
\nonumber P(W,\mathcal{P})&=&\sum_{i,j}k_{i \rightarrow j} \frac{P(E_i)}{g(E-E_i)} \\
\nonumber &=&\sum_{i}g(E-E_i) \frac{P(E_i)}{g(E-E_i)}\\
\nonumber &=&1,
}
and 
\EQ{
\nonumber P(-W,\mathcal{P}^{rev})&=&\sum_{i,j}k^{rev}_{j \rightarrow i} \frac{P(E_j)}{g(E-E_j-W)}\\
\nonumber &=&\sum_{j}g(E-E_j-W)\frac{P(E_j)}{g(E-E_j-W)}\\
\nonumber &=&1
}
since the sum of probabilities of the initial and final energy levels have to equal one.
Therefore, the fully deterministic work extraction has only a trivial distribution of one value of extracted
work and zero everywhere else. For a fluctuation theorem to say something about the work probability, it has to have a width.
Below we extend the relation to the $\epsilon$-deterministic case, thereby giving the proof of the Theorem.
To achieve this, we use the technique of
$\beta$-ordering as introduced in Supplementary Note 4 of \cite{Horodecki2013}
and as illustrated in
Fig. (\ref{beta}). The graph plots the Gibbs rescaled probabilities $P(E_i) e^{\beta E_i}$ against the truncated partition function $\sum_i e^{-\beta E_i}$ 
for each energy level of the work system, $E_i$, arranging states in decreasing order of their 
value $P(E_i) e^{\beta E_i}$. In this staircase function, the area of each rectangle is equal to the probability of the state, $P(E_i)$, and the total width 
is equal to $Z$, the partition function. 
Removing the maximum number of bins in Fig. (\ref{beta}), with the minimum total probability then corresponds to removing states from the right side of the $\beta$-ordered 
spectrum. This procedure is called \emph{smoothing}. 

\begin{figure}[t!]
  \centering
      \includegraphics[width=0.45\textwidth]{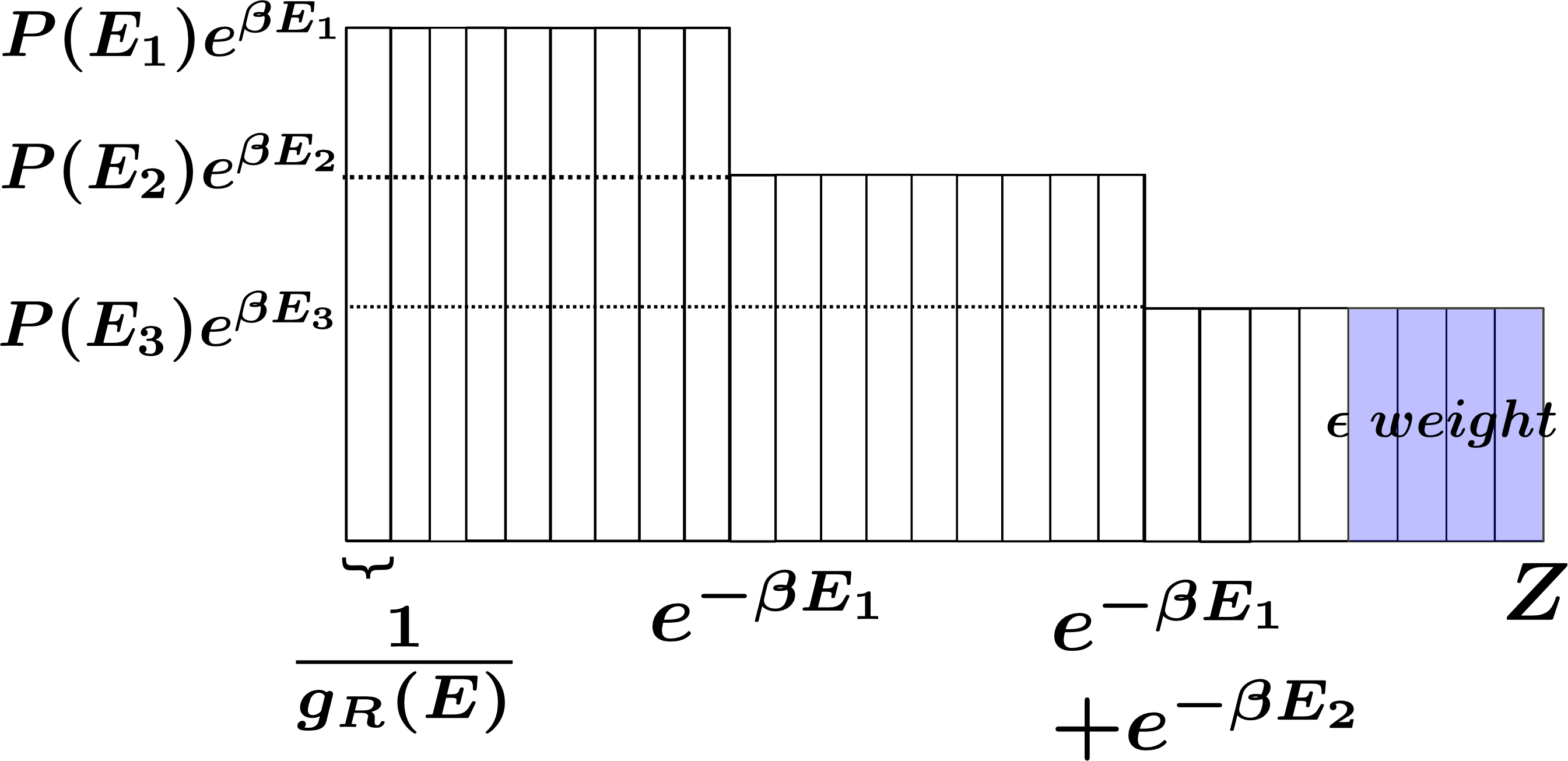}
  \caption{\small A weight $\epsilon$ is being taken out after $\beta$-ordering of the initial state. $\beta$-ordering refers to rearranging the eigenstates'
energies in decreasing order of their weight $P(E_i)e^{\beta E_i}$. Since we are interested in removing as many eigenstates as possible
while keeping a total weight of at least $1-\epsilon$, we cut from the far right end of the $\beta$-ordered spectrum. The white area shows the state that is 
$\epsilon$-close to the initial state and in our case maximises the Renyi relative entropy of order 0 in Eq. (\ref{diver}).}
  \label{beta}
\end{figure}

\begin{proof}

We want to express the LHS of Eq. (\ref{theorem}) in terms of the work extracted from the system. We first start by giving the ratio of the work cost
probability to the probability of its reverse process for the case maximum possible work was extracted. Then we use that to prove the fluctuation
theorem in the general case. 

Starting with extracting the maximum work in the $\epsilon$-deterministic forward process we
need to find a minimal set of eigenstates such that their total probability is at least $1-\epsilon$.
Removing an $\epsilon$ weight of the initial state amounts to such a smoothing.

To achieve this we $\beta$-order 
the eigenstates of the initial state as explained in the main text.
Then we remove as many eigenstates from the low weight 
end of the spectrum as possible while staying within the $1-\epsilon$ limit. This 
is done by choosing an index $l$, such that
\EQ{
1-\epsilon \geq \sum_j\sum_{i=1}^l k_{i \rightarrow j}\frac{P(E_i)}{g(E-E_i)} \label{geq}
}
and 
\EQ{
1-\epsilon \leq \sum_j \sum_{i=1}^{l+1} k_{i \rightarrow j}\frac{P(E_i)}{g(E-E_i)}. \label{leq}
}

Now we map these eigenvalues of weight $1-\epsilon$ to $W$. For such a
mapping, the ratio of the work gain distribution to its
work cost counterpart is
\begin{widetext}
\EQ{
\nonumber \frac{P(W_{max},\mathcal{P^\epsilon})}{P(-W_{max},\mathcal{P}^{\epsilon,rev})}&=&
\nonumber\frac{\sum_j \Bigg(\Big[\sum_{i=1}^l k_{i \rightarrow j} \frac{P(E_i)}{g(E-E_i)}\Big] + k_{{l+1} \rightarrow j} \frac{P(E_{l+1})}{g(E-E_{l+1})}\Bigg) }{\sum_{i,j}k^{rev}_{j \rightarrow i} \frac{P(E_j)}{g(E-E_j-W)}}\\
&=&\frac{\sum_{i=1}^l g(E-E_i) \frac{P(E_i)}{g(E-E_i)} + \frac{1-\epsilon -\sum_{i=1}^lP(E_i)}{P(E_{l+1})}g(E-E_{l+1}) \frac{P(E_{l+1})}{g(E-E_{l+1})} }{\sum_{i,j}k^{rev}_{j \rightarrow i} \frac{P(E_j)}{g(E-E_j-W)}}\label{epCount}\\
&=& \frac{\sum_{i=1}^l g(E-E_i) \frac{\sum_j k^{rev}_{j \rightarrow i} \frac{P(E_j)}{g(E-E_j-W)}}{g(E-E_i)}+\frac{1-\epsilon -\sum_{i=1}^lP(E_i)}{P(E_{l+1})}g(E-E_{l+1}) \frac{\sum_j k^{rev}_{j \rightarrow {l+1}} \frac{P(E_j)}{g(E-E_j-W)}}{g(E-E_{l+1})} }{\sum_{i,j}k^{rev}_{j \rightarrow i} \frac{P(E_j)}{g(E-E_j-W)}} \label{compBack} \\
&=&\frac{\sum_{i=1}^l g(E-E_i)+\frac{1-\epsilon -\sum_{i=1}^lP(E_i)}{P(E_{l+1})}g(E-E_{l+1})}{\sum_{j} g(E-E_j-W^\epsilon)} \label{Renyi}
}
\end{widetext}
In Eq. (\ref{epCount}) we used Eqs. (\ref{geq}) and (\ref{leq}) to count the number of states with their specific weights that are mapped to the 
work extracted. As opposed to the setting in Lemma \ref{equality}, here th initial state of the reverse process and the final state of the forward
process are not the same. In \cite{Gemmer2015} it was discussed that the final state of the system in an $\epsilon$-deterministic process may not be exactly
a thermal state. To formulate the fluctuation theorem for these processes, we choose the initial state of the backward process to be the same as 
that in the Lemma \ref{equality}, where the backward process starts by the system correlated with the bath, such that the reduced state, describing 
the system is in the Gibb state. Using this, in Eq. (\ref{compBack}), the probability of the system being in energy state $E_{l+1}$ is re-written
in terms of the probabilities in the backward process. 
In the backward process the work $W$ is returned from the battery to the system also via a $\epsilon$-deterministic process. It was shown in 
\cite{AAberg2013} that
for $\epsilon$-deterministic processes, the thermal distribution has a maximum work content of $-\frac{1}{\beta}\ln (1-\epsilon)$. This means 
our reverse process costs this amount less than a deterministic reversal. Therefore we subtract this amount from the work stored in the battery
when calculating the total number of states in the backward process in order to conserve the total energy. 
Hence, defining $W^\epsilon =W+\frac{1}{\beta}\ln (1-\epsilon)$ we have
\EQ{
\sum_i k^{rev}_{j \rightarrow i} = g(E-E_j-W^\epsilon),
}
which gives the denominator in Eq. (\ref{Renyi}). As we see, Eq. (\ref{Renyi}) gives ratio the probability of maximum work extraction to that of 
its reversed process in terms of the number of the eigenstates that are mapped to the work content. However, in an $\epsilon$-deterministic 
scenario we do not always extract the maximum work possible, which is the motivation for having a fluctuation theorem. We would like to have 
the ratio above for any possible work that can be extracted in an $\epsilon$-deterministic process. Let us consider an unknown parameter $\delta$
to characterise a possible work extracted. A work $W^\delta$ is obtained by mapping a $1-\delta$ number of eigenstates in Eq. (\ref{Renyi}) to work.
Since the battery in that case is charged by $W^\delta$ amount, the ratio above generalises to
\begin{widetext}
\EQ{
\nonumber \frac{P(W^\delta,\mathcal{P^\epsilon})}{P(-W^\delta,\mathcal{P}^{\epsilon,rev})}&=&
\nonumber \frac{\Big(\sum_{i=1}^l g(E-E_i)+\frac{1-\epsilon -\sum_{i=1}^lP(E_i)}{P(E_{l+1})}g(E-E_{l+1})\Big)\frac{1-\delta}{1-\epsilon}}{\sum_{j} g(E-E_j-W^\epsilon)} \label{04appft}\\
&=&\exp \{\beta W^\delta +\ln(1-\delta) - D_0^\epsilon (\rho_s||\tau_s)\} \label{finEp}
}
\end{widetext}
Eq. (\ref{finEp}) follows from a similar argument where we derived Eq. (\ref{fin}) from Eq. (\ref{simp}), albeit with smoothing. For the definition of smooth 
Renyi relative entropy see Eq. (\ref{diver}). Smoothing  maximises Renyi relative entropy over all states that are $\epsilon$-close to the initial 
state. As we discussed, in our situation, this is done by removing the eigenstates that have the lowest probability, with the total weight of 
$\epsilon$. The expression in the big brackets in the numerator of Eq. (\ref{Renyi}) corresponds to $g(E)$ times that entropy and is the same expression used in 
\cite{Horodecki2013}.
\end{proof}

\emph{Remark 2 --} For clarity of the proof we have used the notation $W^\delta$ for the work extracted. This corresponds to $W$ in the main text and in particular in the statement of Theorem~\ref{epsilon}.

\emph{Remark 3 --} To formalise the fact that the backward process is the reversal of the forward process in the presence of the epsilon-error, notice that we require that the battery has to be charged 
the amount required to perform the backward process. Although here the global unitary that generates the thermal operation is not unique, one can still decompose any such unitary $U$, as 
a completely positive trace-preserving map on the reservoir and the system, $\mathcal{N}(\cdot)$, and the corresponding translation unitary map $\Gamma(\cdot)$ on the battery as 
$U \rho_{Rsb} U^\dagger= \mathcal{N}_i(\rho_{Rs})\otimes \Gamma_i(\rho_b)$. This means that defining the backward process by the inverse of the forward unitary will send the battery from its charged level to the ground
state. Finally, in order for this process to be a valid backward process, we require that the map $\mathcal{N}$ send the final state of the forward process of Lemma 1 to a state that is close to our initial state. 
Suppose the unitary that has been implemented is one that takes the states that are on the surface of the ball of 
$\epsilon$-close states to $\rho_s$ into $\tau_s$, i.e. the final state achieved in Lemma 1. If the process fails to extract the maximum work, the final state will not exactly be $\tau_s$. Nevertheless, the backward
process, defined by the inverse of the unitary will map the state given by the final state of Lemma 1, to a state that is on the surface of the ball of $\epsilon$-close states to $\rho_s$. This is of course the farthest that the final state of the 
backward process can be from $\rho_s$. Suppose the unitary is implemented, such that mapping does not use all the eigenstates in the range depicted as $\epsilon$ in Fig. (\ref{beta}). In that case the inverse process
will map the final state of Lemma 1 into a state within the ball of $\epsilon$-close states to $\rho_s$. 

\end{document}